%
%
%
%
%
\RequirePackage{fix-cm}
\documentclass[twocolumn]{svjour3}          
\smartqed  

\renewenvironment{proof}{{\em Proof.} }{\hfill $\Box$}
\usepackage{xargs}
\usepackage{hyperref,pdfsync}
\usepackage{aliascnt}
\usepackage{ifthen}
\usepackage{amsmath}
\usepackage{amsfonts}
\usepackage{amssymb}
\usepackage{algorithm}
\usepackage{algorithmicx}
\usepackage{algpseudocode}
\usepackage{enumerate}
\usepackage{stmaryrd}
\usepackage{color}
\usepackage{graphicx}
\usepackage{caption}
\usepackage{subcaption}
\usepackage[textwidth=3cm, textsize=footnotesize]{todonotes}
\setlength{\marginparwidth}{1cm}

\newtheorem{thm}{Theorem}

\newaliascnt{defi}{thm}

\aliascntresetthe{defi}

\newaliascnt{lem}{thm}
\newtheorem{lem}[lem]{Lemma}
\aliascntresetthe{lem}

\newaliascnt{prop}{thm}

\aliascntresetthe{prop}

\newaliascnt{rem}{thm}
\newtheorem{rem}[rem]{Remark}
\aliascntresetthe{rem}


%
%
%
%
%


\include{dmo2014_def}


\begin{document}

\title{On the use of Markov chain Monte Carlo methods for the sampling of mixture models\thanks{This work is supported by the Swedish Research Council, Grant 2011-5577.}
}
\subtitle{A statistical perspective}


\author{Randal Douc         \and
            Florian Maire         \and
            Jimmy Olsson
}


\institute{R. Douc \at
              Institut T\'el\'ecom/T\'el\'ecom SudParis, CNRS UMR 5157 SAMOVAR, Evry, France \\
              \email{randal.douc@telecom-sudparis.eu}           
           \and
           F. Maire \at
           Institut T\'el\'ecom/T\'el\'ecom SudParis, CNRS UMR 5157 SAMOVAR, Evry, France \\
           \email{florian.maire@telecom-sudparis.eu}           
           \and
           J. Olsson \at
           KTH Royal Institute of Technology, Stockholm, Sweden \\
           \email{jimmyol@kth.se}
}

\date{Received: date / Accepted: date}

\maketitle

\begin{abstract}
In this paper we study asymptotic properties of different data-augmentation-type Markov chain Monte Carlo algorithms sampling from mixture models comprising discrete as well as continuous random variables. Of particular interest to us is the situation where sampling from the conditional distribution of the continuous component given the discrete component is infeasible. In this context, we cast \emph{Carlin \& Chib's pseudo-prior method} into the framework of mixture models and discuss and compare different variants of this scheme. We propose a novel algorithm, the \emph{FCC sampler}, which is less computationally demanding than any Metropolised Carlin \& Chib-type algorithm. The significant gain of computational efficiency is however obtained at the cost of some asymptotic variance. The performance of the algorithm vis-\`a-vis alternative schemes is investigated theoretically, using some recent results obtained in \cite{maire:douc:olsson:2014} for inhomogeneous Markov chains evolving alternatingly according to two different $\starg$-reversible Markov transition kernels, as well as numerically.


\keywords{Asymptotic variance \and Carlin \& Chib's pseudo-prior method \and inhomogeneous Markov chains \and Metropolisation \and mixture models \and Peskun ordering}
\end{abstract}

\section{Introduction}
\label{sec:introduction}
To sample from \emph{mixture models} $\starg$ comprising a discrete and a continuous random variable, denoted by $M$ and $Z$, respectively, is a fundamental problem in statistics. In this paper we study the use of different data-augmentation-type Markov chain Monte Carlo (MCMC) algorithms for this purpose. Of particular interest to us is the situation where sampling from the conditional distribution of $Z$ given $M$ is infeasible. 

In many applications the most natural approach to sampling from $\starg$ goes via the \emph{Gibbs sampler}, which samples alternatingly from the conditional distributions $M \mid Z$ and $Z \mid M$. Since the component $M$ is discrete, the former sampling step is most often feasible (at least when $\starg$ is known up to a normalising constant). On the contrary,  drawing $Z \mid M$ is in general infeasible; in that case this sampling step is typically \emph{metropolised} by replacing, with a Metropolis-Hastings probability, the value of $Z$ obtained at the previous iteration by a candidate drawn from some proposal kernel. This yields a so-called \emph{Metropolis-within-Gibbs}---or \emph{hybrid}---sampler. 

However, when the modes of the mixture distribution are well-separated, implying a strong correlation between $M$ and $Z$, the Gibbs sampler has in general very limited capacity to move flexibly between the different modes, and exhibits for this reason most often very poor mixing (see \cite{doi:10.1198/1061860031329} for some discussion). Since this problem is due to model dependence, it effects the standard Gibbs as well as the hybrid sampler. In order to cope with this well-known problem, we cast \emph{Carlin \& Chib's pseudo-prior method} \cite{carlin:bayesian} into the framework of mixture models. The method extends the target model with a set of auxiliary variables that are used for moving the discrete component. When the distribution (determined by a set of \emph{pseudo-priors}) of the auxiliary variables is chosen optimally (an idealised situation however), the method produces indeed i.i.d. samples from the marginal distribution of $M$ under $\starg$. Given $M$, the $Z$ component is sampled from $Z \mid M$ in accordance with the Gibbs sampler, with possible metropolisation in the case where exact sampling is infeasible. The latter scheme will be referred as the \emph{Metropolised Carlin \& Chib-type} (MCC) \emph{sampler}. 

Surprisingly, it turns out that passing directly and deterministically the value of the $M$th auxiliary variable, obtained through sampling from the pseudo-priors at the beginning of the loop, to the $Z$ component yields a Markov chain that is still $\starg$-reversible (see \autoref{rem:reversibility}), and using some novel results obtained in \cite{maire:douc:olsson:2014} on the comparison of asymptotic variance for inhomogeneous Markov chains we are able to prove (see \autoref{thm:compMCC-FCC}) that this novel MCMC algorithm, referred to as the \emph{Frozen Carlin \& Chib-type} (FCC) \emph{sampler}, generates a Markov chain whose sample path averages have always higher asymptotic variance than those of the MCC sampler for a large class of objective functions. This is well in line with our expectations, as the MCC sampler ``refreshes'' more often the $Z$ component. On the other hand, since this component is already modified through sampling from the pseudo-priors, which, when well-designed, should be close to the true conditional distribution $Z \mid M$, we may expect that the additional mixing provided by the MCC sampler is only marginal. This is also confirmed by our simulations, which indicate only a small advantage of the MCC sampler to the  FCC sampler in terms of autocorrelation. As the FCC algorithm omits completely the Metropolis-Hastings operation of the MCC sampler, it is considerably more computationally efficient. Thus, we consider the FCC sampler as a strong alternative to the MCC sampler in terms of efficiency (variance per unit CPU). 

The paper is structured as follows: in \autoref{sec:preliminaries} we introduce some notation and describe the mixture model framework under consideration. \autoref{sec:MCMC} describes the Carlin \& Chib-type MCMC samplers studied in the paper. In \autoref{sec:main:results} we prove that the involved algorithms are indeed $\starg$-reversible and provide a theoretical comparison of the MCC and FCC samplers. Finally, in the implementation part, \autoref{sec:numerics}, we illustrate and compare numerically the algorithms on a two exemples: a mixture of Gaussian distributions and a model where the mixture variables are only partially observed.


\section{Preliminaries}
\label{sec:preliminaries}
\subsection{Notation}
\label{subsec:notation}

We assume throughout the paper that all variables are defined on a common probability space $(\Omega, \mathcal{F}, \mathbb{P})$. We will use upper case for random variables and lower case for realisations of the same, and write ``$X \leadsto x$'' when $x$ is realisation of $X$. We write ``$X \sim \mu$'' to indicate that the random variable $X$ is distributed according to the probability measure $\mu$. For any $\mu$-integrable function $h$ we let $\mu(h) \eqdef \int h(x) \mu(\rmd x)$ be the expectation of $h(X)$ under $\mu$. Similarly, for Markov transition kernels $M$Êwe write $Mf(x) \eqdef \int f(x') M(x, \rmd x')$ whenever this integral is well-defined. For any two probability measures $\mu$ and $\mu'$ defined on some measurable spaces $(\Xsp, \Xalg)$ and $(\mathsf{X}', {\Xalg}')$, respectively, we denote by $\mu(\rmd x)\mu'(\rmd x')$ the product measure $\mu \varotimes \mu'(\rmd x \times \rmd x')$ on $(\Xsp \times \mathsf{X}', \Xalg \varotimes \Xalg')$. For $(m, n) \in \zset^2$ such that $m \leq n$, we denote by $\intvect{m}{n} \eqdef \{m, m + 1, \ldots, n\} \subset \zset$. Moreover, we denote by $\Npos \eqdef \nset \setminus \{ 0 \}$ the set of positive integers.

Finally, given some probability measure $\pi$ on $(\Xsp, \Xalg)$ we recall, firstly, that a Markov transition kernel $M$ is called \emph{$\pi$-reversible} if $\pi(\rmd x) M(x, \rmd x') = \pi(x') M(x', \rmd x)$ and, secondly, that $\pi$-reversibility of $M$ implies straightforwardly that this kernel allows $\pi$ as a stationary distribution.
\subsection{Mixture models}
\label{subsec:carlinchib}

Throughout this paper, our main objective is to sample a probability distribution $\starg$ on some product space $\Ysp \eqdef \intvect{1}{n} \times \Zsp$, where $(\Zsp, \Zalg)$ is some (typically uncountable) measurable space, associated with the $\sigma$-field $\Yalg \eqdef 2^{\intvect{1}{n}} \varotimes \Zalg$. Thus, a $\starg$-distributed random variable $\Y{} = (\M{}, \Z{})$ comprises an $\intvect{1}{n}$-valued (discrete) random variable $\M{}$ and a $\Zsp$-valued (typically continuous) random variable $\Z{}$.

In the following we assume that $\starg(\rmd \m \times \rmd \z)$  is dominated by a product measure $|\rmd \m| \nu(\rmd \z)$, where $|\rmd \m|$ denotes the counting measure on $\intvect{1}{n}$ and $\nu$ is some nonnegative measure on $(\Zsp, \Zalg)$, and denote by $\starg(\m, \z)$ the corresponding density function on $\intvect{1}{n} \times \Zsp$. We may then define the conditional density and probability functions
\begin{equation} \label{eq:cond:densities}
	\begin{split}
		\starg(\m \mid \z) &\eqdef \frac{\starg(\m, \z)}{\sum_{\m' = 1}^n \starg(\m', \z)} \eqsp, \\
		\starg(\z \mid \m) &\eqdef \frac{\starg(\m, \z)}{\int \starg(\m, \z') \nu(\rmd \z')}
	\end{split}
\end{equation}
(w.r.t. $|\rmd \m|$ and $\nu$, respectively) on $\intvect{1}{n}$ and $\Zsp$, respectively. We also define the marginal probability function
$$
	\starg(\m) \eqdef \int \starg(\m, \z) \nu(\rmd \z) 
$$
(w.r.t. $|\rmd \m|$) on $\intvect{1}{n}$.

\section{Markov chain Monte Carlo methods for mixture models}
\label{sec:MCMC}
Using the conditional distributions \eqref{eq:cond:densities}, a natural way of sampling $\starg$ consists in implementing a standard Gibbs sampler simulating a Markov chain $\sequence{\Y{\GIBBS}}$ with transitions described by the following algorithm.

\begin{algorithm}[H]
    \begin{algorithmic}[7]
        \caption{Gibbs sampler} \label{alg:algGibbs}
        \Require $\Y[k]{\GIBBS}=(\m, \z)$,
        \begin{enumerate}[(i)]
	        \item draw $\M{}' \sim \starg(\rmd m \mid \z)$ and call the outcome $\m'$ (abbr. $\leadsto \m'$),
	        \item draw $\Z{}' \sim \starg(\rmd z \mid \m') \leadsto \z'$,
	        \item set $\Y[k+1]{\GIBBS}\gets(\m', \z')$.
        \end{enumerate}
    \end{algorithmic}
\end{algorithm}

\begin{rem} \label{rem:metroWithinGibbs}
	Since $\M[]{}$ is a discrete random variable it is most often possible to sample $\M[]{} \sim\starg(\rmd \m \mid \z)$. In contrast, sampling $\Z[]{}\sim\starg(\rmd \z \mid \m)$ is not always possible. In that case, one may replace Step~(ii) by a Metropolis-Hastings step, yielding a Metropolis-within-Gibbs algorithm (see \cite[section 10.3.3]{Robert:MCMC} for details).
\end{rem}

Using the output \autoref{alg:algGibbs}, any expectation $\starg(f)$, where $f$ is some $\starg$-integrable objective function on $\Ysp$, can be estimated by the sample path average
$$
\hat{\pi}^{(\GIBBS)}_n(f) \eqdef \frac{1}{n} \sum_{k = 0}^{n - 1} f(\Y[k]{\GIBBS}) \eqsp.
$$
Even though \autoref{alg:algGibbs} generates a Markov chain $\sequence{\Y{\GIBBS}}$ with stationary distribution $\starg$, the discrete component $\sequence{\M{\GIBBS}}$ tends in practice to get stuck in a few states. Indeed, when the variable $Z$ is sampled from its conditional distribution given $M = \m$, the probability of jumping to another index $m' \neq m$ is proportional to $\starg(m', \z)$, which may be very low when the index component $M$ is informative concerning the localisation of $Z$. This will lead to poor mixing and, consequently, high variance of any estimator $\hat{\pi}^{(\GIBBS)}_n(f) $. To be specific, let $h$ be some function on $\intvect{1}{n}$ and assume that we run the Gibbs sampler in \autoref{alg:algGibbs} to estimate $\int h(m) \starg(\rmd m)$. Then $\sequence{\M{\GIBBS}}$ is itself a Markov chain with transition kernel $G$, say, and, starting with $\M[0]{\GIBBS} \sim \starg(\rmd m)$, we have
\begin{multline*}
\covar{h(\M[0]{\GIBBS})}{h(\M[1]{\GIBBS})}=\covar{h(\M[0]{\GIBBS})}{Gh(\M[0]{\GIBBS})}\\
=\int \starg(\rmd z) \left(\int  \starg(\rmd m \mid z ) h(m)\right)^2 \geq 0 \eqsp. 
\end{multline*}
Combining this with the fact that $G$ is $\starg(\rmd m)$-reversible, we obtain
\begin{multline*}
\covar{h(\M[0]{\GIBBS})}{h(\M[2k+1]{\GIBBS})}\\
=\covar{G^k h(\M[0]{\GIBBS})}{G (G^kh)(\M[0]{\GIBBS})} \geq 0 \eqsp.
\end{multline*}
Moreover, using again that $G$ is $\starg(\rmd m)$-reversible,
\begin{multline*}
\covar{h(\M[0]{\GIBBS})}{h(\M[2k]{\GIBBS})}\\
=\covar{G^k h(\M[0]{\GIBBS})}{G^kh(\M[0]{\GIBBS})} \geq 0  \eqsp. 
\end{multline*}
Finally, letting $f(m,z) \equiv h(m)$, we obtain 
$$
\var{\sqrt{n} \hat{\pi}^{(\GIBBS)}_n(f)} \geq \var{h(\M[0]{\GIBBS})} \eqsp, 
$$
showing that the Gibbs sampler approximates the index less accurately than i.i.d. sampling from $\starg(\rmd m)$.

The \emph{pseudo-prior method} of B.~P.~Carlin and S.~Chib \cite{carlin:bayesian} was introduced in the context of model selection and can successfully be adapted to mixture models. By introducing some auxiliary variables, this method increases the number of moves of the index component. The algorithm may be regarded as a Gibbs sampler-based data-augmentation algorithm targeting the distribution $\targ$ defined on the extended state space $\intvect{1}{n} \times \Zsp^n$ by
\begin{equation} \label{eq:pi_CC}
	\targ(\rmd \m \times \rmd\auxi) \eqdef \starg(\rmd \m \times \rmd \auxi_\m) \bigotimes_{j \neq \m} \rho_j(\rmd \auxi_j) \eqsp,
\end{equation}
where $\auxi = (\auxi_1, \ldots, \auxi_n) \in \Zsp^n$ and the probability measures $\{\rho_j ; j \in \intvect{1}{n} \}$ are referred to as \emph{pseudo-priors} or \emph{linking densities} (the terminology comes from \cite{carlin:bayesian}). We assume that also the pseudo-priors
are dominated jointly by the nonnegative measure $\nu$ and use the same symbols $\rho_j$ for denoting the corresponding densities. As a consequence, also the measure $\targ$ defined in \eqref{eq:pi_CC} has a density
\begin{equation*} 
	\targ(\m, \auxi) \eqdef \starg(\m, \auxi_\m) \prod_{j \neq \m} \rho_j(\auxi_j) \quad ((\m, u) \in \intvect{1}{n} \times \Zsp^n)
\end{equation*}
with respect to $|\rmd \m| \varotimes \nu^{\varotimes n}(\rmd u)$.

The choice of the pseudo-priors is tuned by the user provided that these are analytically tractable and can be sampled from. Denote by $\sequence{\Y{\CC}}$ the Markov chain generated by this algorithm, which we will in the following refer to as the \emph{Carlin \& Chib-type} (CC-type) \emph{sampler} and whose transitions comprise the $n + 1$ sub-steps  described in \autoref{alg:algCC} below.

\begin{algorithm}[H]
    \begin{algorithmic}[7]
        \caption{CC-type sampler} \label{alg:algCC}
        \Require $\Y[k]{\CC} = (\m, \auxi_m)$,
        \begin{enumerate}[(i)]
        \item for all $j \neq \m$, draw $\Auxi_j \sim \rho_j \leadsto \auxi_j$,
        \item draw $\M{}' \sim \targ(\rmd \m \mid \auxi) \leadsto \m'$,
        \item \label{item:algCC:piStar}draw $\Auxi_{m'}' \sim \starg(\rmd \z \mid \m') \leadsto \auxi'_{\m'}$,
        \item set $\Y[k + 1]{\CC}\gets (\m', \auxi'_{\m'})$.
        \end{enumerate}
    \end{algorithmic}
\end{algorithm}

Intuitively, \autoref{alg:algCC} allows different models to be visited more frequently than in the Gibbs sampler; indeed, in Step~(ii) the probability of moving to the index $m'$ is
\begin{equation} \label{eq:CC_poids}
	\targ(m' \mid \auxi) \propto \starg(m',\auxi_{m'})/ \rho_{m'}(\auxi_{m'})\eqsp, 
\end{equation}
where the right hand side is close to $\starg(m')$ if the pseudo-priors are chosen such that $\rho_{\ell}(\z)$ is close to $\starg(\z \mid \ell)$ for all $(\ell, \z) \in \intvect{1}{n} \times \Zsp$. The optimal case where $\rho_\ell(\z) \equiv \starg(\z \mid \ell)$ implies, via \eqref{eq:CC_poids}, that
$$
	\targ(m' \mid \auxi)\propto\starg(m')\eqsp.
$$
Thus, in this case, Step~(ii) draws actually $\M{}'$ according to the \emph{exact} marginal $\starg(\rmd \m)$ of the class index random variable regardless the value of $\auxi$, which implies that the algorithm simulates \emph{i.i.d. samples} according to $\starg$. This actually gives a more efficient approximation than that produced by the Gibbs sampler, whose variance w.r.t. the index component is, as we remarked previously, always larger than that obtained through i.i.d. sampling from $\starg(\rmd \m)$. However, this ideal situation requires the quantity $\starg(\z \mid \ell)$ to be tractable, which is typically not the case.

As in \autoref{rem:metroWithinGibbs}, one may replace Step~(iii) in \autoref{alg:algCC} by a Metropolis-Hastings step if sampling from $\starg(\rmd \z \mid \m')$ is infeasible. This is most often the case when $\starg(\rmd \m \times \rmd \z)$ is the {\em a posteriori} distribution of $(M, Z)$ conditionally on one or several observations (see \autoref{sec:partial:observations} for an example). In that case $\starg$ is known only up to a normalizing constant, which prevents sampling from the conditional density $\starg(\rmd \z \mid \m')$. The resulting algorithm will in the following be referred to as the Metropolised CC-type (MCC) sampler and is presented in \autoref{alg:algMCC}, where $\{ÊR_\ell ; \ell \in \intvect{1}{n} \}$ is set of proposal kernels on $\Zsp \times  \Zalg$. Assume for simplicity that all these kernels are jointly dominated by the reference measure $\nu$ and denote by $\{Êr_\ell ; \ell \in \intvect{1}{n} \}$ the corresponding transition densities with respect to this measure. Introducing also the Metropolis-Hastings acceptance probability
\begin{equation} \label{eq:def:alpha}
	\alpha_\ell(u, z) \eqdef \frac{\starg(\ell, z) r_\ell(z, u)}{\starg(\ell, u) r_\ell(u, z)} \wedge 1 \quad ((\ell, u, z) \in \intvect{1}{n} \times \Zsp^2) \eqsp,
\end{equation}
the MCC algorithm is described as follows.
\begin{algorithm}[H]
    \begin{algorithmic}[7]
        \caption{MCC sampler} \label{alg:algMCC}
        \Require $\Y[k]{\MCC} = (\m, \auxi_m)$,
        \begin{enumerate}
        \item [(i)] for all $j \neq \m$, draw $\Auxi_j \sim \rho_j \leadsto \auxi_j$,
        \item [(ii)] draw $\M{}' \sim \targ(\rmd \m \mid \auxi) \leadsto \m'$,
        \item [(iii.1)] draw $Z \sim R_{m'}(\auxi_{m'}, \rmd z) \leadsto z$,
        \item [(iii.2)] set $\Auxi_{m'}' \gets
        \begin{cases}
         z& \mbox{w.~pr. $\alpha_{m'}(u_{m'}, z)$} \eqsp, \\
        \auxi_{\m'} & \mbox{otherwise} \eqsp,
        \end{cases} \leadsto  \auxi'_{\m'}$ \eqsp,
        \item [(iv)] set $\Y[k + 1]{\MCC} \gets (\m', \auxi'_{\m'})$.
        \end{enumerate}
    \end{algorithmic}
\end{algorithm}
Note that Step~\eqref{item:algCC:piStar} generates, given $\auxi_{m'}$, $\Auxi_{m'}' \sim K_{m'}(\auxi_{m'}, \rmd u')$, where
\begin{multline} \label{eq:def:K-m}
K_\ell(u, \rmd u') \eqdef R_\ell(u, \rmd u') \alpha_\ell(u, u') \\Ê
+\delta_u(\rmd u') \left( 1 - \int R_\ell(u, \rmd u'') \alpha_\ell(u, u'') \right) \\Ê
((u, \ell) \in \intvect{1}{n}) \eqsp.
\end{multline}
It can be easily checked (using \eqref{eq:def:alpha}) that $K_{m'}$ is indeed a Metropolis-Hastings kernel with respect to $\pi(\rmd z \mid m')$; it is thus $\pi(\rmd z \mid m')$-reversible.

Remarkably, it turns out that Step~\eqref{item:algCC:piStar} in \autoref{alg:algMCC} may be omitted, which may, in some cases, imply a significant gain of computational complexity. The MCC sampler then simplifies to what we will refer to as the Frozen CC-type (FCC) sampler which is described formally as follows.
\begin{algorithm}[H]
    \begin{algorithmic}[7]
        \caption{FCC sampler} \label{alg:algFCC}
        \Require $\Y[k]{\FCC} = (\m, \auxi_m)$,
        \begin{enumerate}[(i)]
        \item for all $j \neq \m$, draw $\Auxi_j \sim \rho_j \leadsto \auxi_j$,
        \item draw $\M{}' \sim \targ(\rmd \m \mid \auxi) \leadsto \m'$,
        \item $\auxi'_{\m'}\gets\auxi_{\m'}$,
        \item set $\Y[k + 1]{\FCC} \gets(\m', \auxi'_{\m'})$.
        \end{enumerate}
    \end{algorithmic}
\end{algorithm}
As remarked in \autoref{rem:reversibility} in the next section, this novel algorithm produces a Markov chain $\{\Y[k]{\FCC} ; k \in \nset\}$ that indeed admits $\starg$ as an invariant distribution. Nevertheless, as stated in \autoref{thm:compMCC-FCC} below, this algorithm is always less efficient in terms of asymptotic variance than the corresponding MCC sampler. Intuitively, this stems from the fact that once the index $\M{}'$ is drawn, the associated continuous component is selected deterministically without being ``refreshed'' (on the contrary to Step~\eqref{item:algCC:piStar} in \autoref{alg:algMCC}). Nevertheless, as mentioned above, \autoref{alg:algFCC} skips completely the Metropolis-Hastings step (Step~\eqref{item:algCC:piStar}) of \autoref{alg:algMCC} and is therefore considerably less demanding from a computational point of view. In addition, as our numerical simulations indicate that the gain of the asymptotic variance obtained by refreshing, as in the MCC sampler, this component instead of freezing the same as in the FCC sampler seems to be limited  (see \autoref{sec:numerics} for details), we definitely regard the FCC algorithm as a strong challenger of the MCC sampler.

\section{Theoretical results}
\label{sec:main:results}
\subsection{Comparison of asymptotic variance of inhomogeneous Markov chains}

In this section we recall briefly the main result of \cite[Theorem~4]{maire:douc:olsson:2014}, which is propelling the coming analysis. The following---now classical---orderings of Markov kernels turns out to be highly useful.
\begin{definition} \label{defi:Peskun:ordering}
	Let $\P[0]$ and $\P[1]$ be Markov transition kernels on some state space $(\Xsp, \Xalg)$ with common invariant distribution $\targ$. We say that $\P[1]$ \emph{dominates} $\P[0]$
	\begin{itemize}
	\item \emph{on the off-diagonal}, denoted $\P[1] \succeq \P[0]$, if for all $\set{A} \in \Xalg$ and $\targ$-a.s. all $x \in \Xsp$,
$$
	\P[1](x, \set{A} \setminus \{ x \}) \geq \P[0](x, \set{A} \setminus \{ x \}) \eqsp.
$$	\item \emph{in the covariance ordering}, denoted $\P[1] \pgeq[1] \P[0]$, if for all $f \in \Ltwo[\targ]$,
$$
	\int f(x) \P[1]f(x) \targ(\rmd x) \leq \int f(x) \P[0]f(x) \targ(\rmd x) \eqsp.
$$
	\end{itemize}
\end{definition}
The covariance ordering, which was introduced implicitly in \cite[p.~5]{tierney:note} and formalised in \cite{mira:ordering}, is an extension of the off-diagonal ordering, since, according to \cite[Lemma 3]{tierney:note}, $\P[1] \succeq \P[0]$ implies $\P[1] \pgeq[1] \P[0]$. Moreover, it turns out that for reversible kernels, $\P[1] \pgeq[1] \P[0]$ implies that the asymptotic variance of sample path averages of chains generated by $\P[1]$ is smaller than or equal to that of chains generated by $\P[0]$ (see the proof of \cite[Theorem~4]{tierney:note}).

In algorithms of Gibbs-type, the ordering in \autoref{defi:Peskun:ordering} is usually not applicable, since the fact that all candidates are accepted with probability one prevents the chain from remaining in the same state. The ordering is however still meaningful when a component is discrete.

In the following, let $\P[i]$ and $\Q[i]$, $i \in \intvect{0}{1}$, be Markov transition kernels on $(\Xsp, \Xalg)$ and let $\sequence{\X{0}}$ and $\sequence{\X{1}}$ be inhomogeneous Markov chains evolving as follows:
\begin{equation} \label{eq:eq1:Markov}
	\X[0]{i} \stackrel{\P[i]}{\longrightarrow} \X[1]{i} \stackrel{\Q[i]}{\longrightarrow} \X[2]{i} \stackrel{\P[i]}{\longrightarrow} \X[3]{i} \stackrel{\Q[i]}{\longrightarrow} \cdots
\end{equation}
This means that for all $k \in \nset$ and $i \in \{0, 1\}$,
\begin{itemize}
	\item $\mathbb{P}\left( \X[2 k+1]{i} \in \rmd x 
	\mid \mathcal{F}_{2 k}^{(i)} \right) = \P[i](\X[2 k]{i}, \rmd x 
	)$,
	\item $\mathbb{P}\left( \X[2 k+2]{i} \in \rmd x 
	\mid \mathcal{F}_{2 k+1}^{(i)} \right) =  \Q[i](\X[2 k+1]{i}, \rmd x 
	)$,
\end{itemize}
where $\mathcal{F}_{n}^{(i)} \eqdef \sigma(\X[0]{i},\ldots, \X[n]{i})$, $n \in \nset$. Now, impose the following assumption.
\begin{hypA} \label{ass:Peskun:order}
	\begin{enumerate}[$(i)$]
		\item $\P[i]$ and $\Q[i]$, $i \in \intvect{0}{1}$, are $\targ$-reversible,
		\item $\P[1] \pgeq[1] \P[0]$ and $\Q[1] \pgeq[1] \Q[0]$.
	\end{enumerate}
\end{hypA}
As mentioned above, $\P[1]\succeq \P[0]$ implies $\P[1]\pgeq[1]\P[0]$; thus, in practice, a sufficient condition for {\A{ass:Peskun:order}}\,(ii) is that $\P[1] \succeq \P[0]$ and $\Q[1] \succeq \Q[0]$. Under these assumptions,  \cite{maire:douc:olsson:2014} established the following result.
\begin{thm}[\cite{maire:douc:olsson:2014}]
	\label{thm:inhomo:ordering}
	Assume that $\P[i]$ and $\Q[i]$, $i \in \intvect{0}{1}$, satisfy {\A{ass:Peskun:order}} and let $\sequence{\X{i}}$, $i \in \intvect{0}{1}$, be Markov chains evolving as in \eqref{eq:eq1:Markov} with initial distribution $\targ$. Then for all $f \in \Ltwo[\targ]$ such that for $i \in \intvect{0}{1}$,
	\begin{multline} \label{eq:assumpFuncThm}
		\sum_{k = 1}^{\infty} \left( |\covar{f(\X[0]{i})}{f(\X[k]{i})}| \right. \\
	\left. +|\covar{f(\X[1]{i})}{f(\X[k+1]{i})}| \right) < \infty
	\end{multline}
	it holds that
	\begin{equation} \label{eq:main:result}
		\asvar{1}{f} \leq \asvar{0}{f} \eqsp,
	\end{equation}
	where
	\begin{equation} \label{eq:main:result:limits}
		\asvar{i}{f} \eqdef \lim_{n \to \infty} \frac{1}{n} \var{\sum_{k = 0}^{n - 1}f(\X[k]{i})} \quad (i \in \intvect{0}{1}) \eqsp.
	\end{equation}
\end{thm}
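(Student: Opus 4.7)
The plan is to translate the inequality $\asvar{1}{f} \le \asvar{0}{f}$ into an operator inequality on $\Ltwo[\targ]$ and to establish it through a symmetrisation argument that exploits the $\targ$-reversibility of $\P[i]$ and $\Q[i]$, combined with a telescoping step that reduces the simultaneous replacement of both kernels to a single-kernel replacement.

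First, since $\P[i]$ and $\Q[i]$ both preserve $\targ$ and $\X[0]{i} \sim \targ$, every $\X[k]{i}$ is marginally $\targ$-distributed. Writing $f_0 := f - \targ(f)$ and introducing $M_i := \P[i] \Q[i]$ and $N_i := \Q[i] \P[i]$, viewed as bounded operators on $\Ltwo[\targ]$, I would group the covariances contributing to the limit in \eqref{eq:main:result:limits} according to the parity of the starting index and of the lag. This leads to four families of bilinear forms: $\pscal{f_0}{M_i^{\ell} f_0}$, $\pscal{f_0}{N_i^{\ell} f_0}$, $\pscal{f_0}{M_i^{\ell} \P[i] f_0}$ and $\pscal{f_0}{N_i^{\ell} \Q[i] f_0}$, all interpreted in $\Ltwo[\targ]$. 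The self-adjointness of $\P[i]$ and $\Q[i]$ immediately yields the identity $M_i^* = N_i$, which collapses the first two families. Under the summability assumption \eqref{eq:assumpFuncThm} the Ces\`aro averages over the starting index and the tail series converge absolutely, producing an explicit representation $\asvar{i}{f} = \pscal{f_0}{\Phi(\P[i], \Q[i]) f_0}$ for a bounded operator $\Phi(\P[i], \Q[i])$ built from $(I - M_i)^{-1}$, $\P[i]$ and $\Q[i]$.

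Next, by passing through the intermediate chain driven by $(\P[1], \Q[0])$, it suffices to establish the inequality when a single kernel changes, say $\Q[1] = \Q[0] =: \Q$ with $\P[1] \pgeq \P[0]$, and then repeat the argument for the other kernel. The covariance ordering $\P[1] \pgeq \P[0]$ is equivalent to the Loewner inequality $\P[1] \preceq \P[0]$ between self-adjoint operators on $\Ltwo[\targ]$. The main technical obstacle is that the product $M_i = \P[i]\Q$ is \emph{not} self-adjoint, so operator monotonicity cannot be applied directly to $\Phi(\P[i], \Q)$. I would circumvent this by introducing, via the spectral calculus of the self-adjoint contraction $\Q$, a symmetrised operator $A_i$ obtained by conjugating $\P[i]$ by a square root of $\Q$; a cyclic rewriting then converts each $\pscal{f_0}{M_i^k f_0}$ into $\pscal{h_0}{A_i^{k-1} h_0}$ for an appropriate fixed $h_0$, reducing $\Phi(\P[i], \Q)$ to a scalar function of the single self-adjoint operator $A_i$.

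With the symmetrisation in hand, the Loewner inequality $\P[1] \preceq \P[0]$ is preserved by conjugation, so $A_1 \preceq A_0$, and operator monotonicity of $\lambda \mapsto (1+\lambda)(1-\lambda)^{-1}$ on $[-1, 1)$ delivers the target inequality $\asvar{1}{f} \le \asvar{0}{f}$. The subtlest point I anticipate is justifying the square-root symmetrisation when $\Q$ has negative spectrum, which a general $\targ$-reversible Markov kernel may have: one has to decompose the spectrum of $\Q$ into its positive and negative parts and treat each separately, or approximate by finite-dimensional spectral truncations. The summability hypothesis \eqref{eq:assumpFuncThm} is what underpins the absolute convergence of the series rearrangements and the passage to the operator-theoretic limits.
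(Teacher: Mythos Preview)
The paper does not contain a proof of this theorem: it is quoted verbatim as \cite[Theorem~4]{maire:douc:olsson:2014} and used as a black box to derive \autoref{thm:compMCC-FCC}. There is therefore no ``paper's own proof'' to compare your proposal against.

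That said, a brief comment on your sketch. The overall architecture---express $\asvar{i}{f}$ as a quadratic form in $f_0$, telescope through the intermediate pair $(\P[1],\Q[0])$ so that only one kernel changes at a time, and then reduce to a monotonicity statement for a self-adjoint operator---is the natural line of attack and is in the spirit of Tierney's original argument for homogeneous reversible chains. The genuine obstruction you correctly flag is the symmetrisation step: a $\targ$-reversible kernel $\Q$ is self-adjoint on $\Ltwo[\targ]$ but its spectrum lies in $[-1,1]$, so a \emph{real} square root $\Q^{1/2}$ need not exist, and your proposed fix of ``decomposing the spectrum into positive and negative parts and treating each separately'' does not obviously yield a conjugation that turns $\P[i]\Q$ into a self-adjoint operator while preserving the Loewner order between $\P[0]$ and $\P[1]$. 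This is not a minor technicality: it is precisely the place where the inhomogeneous case departs from the homogeneous one, and any complete proof must either avoid the square root entirely (for instance by working with resolvents or with an explicit series representation of the asymptotic variance and comparing term by term via a suitable interpolation) or restrict to kernels with nonnegative spectrum. Your outline would go through cleanly under the additional hypothesis that $\Q$ is positive semidefinite, but as written it does not cover the general case asserted in the theorem.
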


\begin{rem}
As shown in \cite[Proposition~9]{maire:douc:olsson:2014}, under the assumption that the product kernels $\P[i] \Q[i]$, $i \in \intvect{0}{1}$, are both \emph{$V$-geometrically ergodic} (according to Definition~7 in the same paper), the absolute summability assumption \eqref{eq:assumpFuncThm} holds true for all objective functions $f$ such that $f$ and $\P[i]f$, $i \in \intvect{0}{1}$, have all bounded $\sqrt{V}$-norm; see again \cite{maire:douc:olsson:2014} for details.
\end{rem}

\subsection{The MCC sampler vs. the FCC sampler}
In the light of the remarks following \autoref{alg:algFCC} it is reasonable to assume that the CC-type sampler (\autoref{alg:algCC}) and the MCC sampler provides more accurate estimates than the FCC sampler. However, since neither $\sequence{\Y{\MCC}}$ nor $\sequence{\Y{\FCC}}$ are $\starg$-reversible, \cite[Theorem 4]{tierney:note} does not allow these two algorithms to be compared. Nevertheless, using \autoref{thm:inhomo:ordering} we may provide a theoretical justification advocating the MCC and CC-type samplers ahead of the FCC sampler in terms of asymptotic variance. To do this we first embed $\sequence{\Y{\MCC}}$ and $\sequence{\Y{\FCC}}$ into inhomogeneous $\starg$-reversible Markov chains $\sequence{\X{\MCC}}$ and $\sequence{\X{\FCC}}$ defined on $\Ysp = \intvect{1}{n} \times \Zsp$ through, for $\alg \in \intvect{\MCC}{\FCC}$:
\begin{multline} \label{alg:cc:gibbs}
	\X[2k]{\alg}=\begin{pmatrix}\M[k]{\alg}\\ \Z[k]{\alg}
	\end{pmatrix} \stackrel{\P[\alg]}{\longrightarrow} \X[2k+1]{\alg}=\begin{pmatrix}\cM[k+1]{\alg}\\ \cZ[k+1]{\alg}
	\end{pmatrix}
	\\Ê\stackrel{\Q[\alg]}{\longrightarrow} \X[2k+2]{\alg}=\begin{pmatrix}\M[k+1]{\alg} \\ \Z[k+1]{\alg}
	\end{pmatrix} \stackrel{\P[\alg]}{\longrightarrow} \cdots
\end{multline}
Here we have defined the kernels
\begin{itemize}
\item
$
\displaystyle \P[\MCC]((m, z), \rmd \check{m} \times \rmd \check{z}) \\Ê
\eqdef \idotsint \left( \prod_{j \neq m} \rho_j(\rmd u_j) \right) \delta_z(\rmd u_m) \targ(\rmd \check{m} \mid u) \delta_{u_{\check{m}}}(\rmd \check{z})
$,

\item $\P[\FCC] \eqdef  \P[\MCC]$,
\item $\Q[\MCC]((\check{m}, \check{z}), \rmd m \times \rmd z) \eqdef \delta_{\check{m}}(\rmd m) K_{\check{m}}(\check{z},\rmd z) $ (where $K_{\cm}$ is defined in \eqref{eq:def:K-m}),
\item $\Q[\FCC] ((\check{m}, \check{z}), \rmd m \times \rmd z) \eqdef \delta_{\check{m}}(\rmd m)\delta_{\check{z}}(\rmd z)$.
\end{itemize}
Setting $\Y[k]{\alg} \eqdef (\M[k]{\alg}, \Z[k]{\alg})$, $k \in \nset$, $\alg \in \intvect{\GIBBS}{\CC}$, it can be checked easily that $\sequence{\Y{\MCC}}$ and $\sequence{\Y{\FCC}}$ have indeed exactly the same distribution as the output of \autoref{alg:algMCC} and \autoref{alg:algFCC}, respectively.

\begin{thm} \label{rem:reversibility}
The Markov chains  $\sequence{\Y{\MCC}}$ and $\sequence{\Y{\FCC}}$ generated  by \autoref{alg:algMCC} and \autoref{alg:algFCC}, respectively, have $\starg$ as invariant distribution.
\end{thm}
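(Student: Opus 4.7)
The plan is to establish that both kernels $\P[\alg]$ and $\Q[\alg]$ admit $\starg$ as invariant distribution. Since $\Y[k]{\alg}=\X[2k]{\alg}$ evolves according to the product kernel $\P[\alg]\Q[\alg]$, $\starg$-invariance of each factor immediately yields $\starg$-invariance for the composition, and hence for $\sequence{\Y{\alg}}$.

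The $\Q$ kernels I would dispatch first. $\Q[\FCC]$ is the identity on $\Ysp$, so it preserves every distribution. For $\Q[\MCC]((\cm,\cz),\rmd\m\times\rmd\z)=\delta_{\cm}(\rmd\m)K_{\cm}(\cz,\rmd\z)$ the discrete coordinate is frozen and the continuous coordinate is updated by the Metropolis-Hastings kernel $K_{\cm}$, which as noted after \eqref{eq:def:K-m} is $\starg(\rmd\z\mid\cm)$-reversible. Decomposing $\starg(\rmd\m\times\rmd\z)=\starg(\rmd\m)\starg(\rmd\z\mid\m)$ and applying reversibility of $K_m$ inside each fibre $\{\m\}\times\Zsp$ then yields $\starg$-reversibility, and a fortiori $\starg$-invariance, of $\Q[\MCC]$.

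The heart of the argument is the invariance of the common kernel $\P\eqdef\P[\MCC]=\P[\FCC]$. My approach is to interpret $\P$ as the composition of two Gibbs-type sweeps on the extended space $\intvect{1}{n}\times\Zsp^n$ equipped with the extended target $\targ$ defined in \eqref{eq:pi_CC}, followed by the deterministic projection $(\m,\auxi)\mapsto(\m,\auxi_\m)$. Concretely, starting from $(\m,\z)\sim\starg$, I would identify $\z$ with $\auxi_\m$ and draw $\Auxi_j\sim\rho_j$ independently for $j\neq\m$; by the product form of $\targ$, the resulting pair $(\m,\auxi)$ is distributed exactly as $\targ$. The subsequent draw $\cm\sim\targ(\rmd\cm\mid\auxi)$ is a standard single-site Gibbs update and thus keeps the joint law $\targ$, after which projecting $(\cm,\auxi)\mapsto(\cm,\auxi_{\cm})$ yields the $\starg$-marginal.

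Making this rigorous boils down to verifying the identity $\int\starg(\rmd\m\times\rmd\z)\P((\m,\z),B)=\starg(B)$ by direct computation: one recognises the product $\starg(\m,u_\m)\prod_{j\neq\m}\rho_j(u_j)$ as the density of $\targ(\m,\auxi)$, swaps the roles of $\m$ and $\cm$ via $\targ(\m,\auxi)\targ(\cm\mid\auxi)=\targ(\cm,\auxi)\targ(\m\mid\auxi)$, sums $\sum_\m\targ(\m\mid\auxi)=1$, and finally marginalises out the redundant coordinates $u_j$, $j\neq\cm$, using $\int\rho_j(u_j)\nu(\rmd u_j)=1$. The only real obstacle is the careful bookkeeping of the Dirac mass $\delta_\z(\rmd u_\m)$ and of the counting-measure sums over $\intvect{1}{n}$; no analytic subtlety is involved, and once this identity is established the claim follows at once.
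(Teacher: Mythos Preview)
Your argument is correct and proves exactly what the theorem asserts: $\starg$-invariance of the transition mechanism. Your treatment of $\Q[\FCC]$ and $\Q[\MCC]$ matches the paper's. For the common kernel $\P=\P[\MCC]=\P[\FCC]$, however, your route differs from the paper's. You establish \emph{invariance} by reading $\P$ as a two-step Gibbs sweep on the augmented space $\intvect{1}{n}\times\Zsp^n$ with target $\targ$, followed by the projection $(\cm,\auxi)\mapsto(\cm,\auxi_{\cm})$; the key observations that (a) augmenting $(\m,\z)\sim\starg$ by $\Auxi_j\sim\rho_j$, $j\neq\m$, produces $(\m,\auxi)\sim\targ$, and (b) the $(\m,\auxi_\m)$-marginal of $\targ$ is $\starg$, are both correct and make the argument transparent. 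The paper instead proves the stronger property of $\starg$-\emph{reversibility} of $\P$ directly, by writing out $\iint f(y,\cy)\starg(\rmd y)\P(y,\rmd\cy)$ and exhibiting it as an integral against a function $A(\m,\cm,\auxi)$ that is symmetric in $(\m,\cm)$.

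What each approach buys: your Gibbs-on-the-extended-space viewpoint is conceptually cleaner for invariance and requires essentially no computation. The paper's symmetry computation is heavier but delivers reversibility, which is not a luxury here: the subsequent variance-comparison result (\autoref{thm:compMCC-FCC}) invokes \autoref{thm:inhomo:ordering}, whose hypothesis {\A{ass:Peskun:order}}(i) requires $\P[\alg]$ and $\Q[\alg]$ to be $\starg$-reversible, not merely $\starg$-invariant. So while your proof settles the present theorem, you would still need the paper's symmetry argument (or an equivalent reversibility proof) before you could proceed to the comparison theorem.
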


\begin{proof}
The result is established by noting that $\Q[\FCC]$ defined above is reversible with respect to any distribution, and in particular it is $\starg$-reversible. Moreover, according to \autoref{lem:rev:MCC} (below), $\P[\MCC]=\P[\FCC]$ and $\Q[\MCC]$ are also $\starg$-reversible. The statement of the theorem follows.
\end{proof}

\begin{thm}
\label{thm:compMCC-FCC}
Let \sequence{\X{\alg}}, $\alg \in \intvect{\MCC}{\FCC}$, be the Markov chains  \eqref{alg:cc:gibbs} starting with $\X[0]{\alg} \sim \starg$ for $\alg \in \intvect{\MCC}{\FCC}$. Then for all real-valued functions $h$ such that for $\alg \in \intvect{\MCC}{\FCC}$,
$$
\sum_{k = 1}^{\infty} |\covar{h(\M[0]{\alg})}{h(\M[k]{\alg})}| < \infty
$$
it holds that
$$
\lim_{n\to\infty} \frac{1}{n}\var{\sum_{k=1}^{n}h(\M[k]{\MCC})} \leq \lim_{n\to\infty} \frac{1}{n}\var{\sum_{k=1}^{n}h(\M[k]{\FCC})} \eqsp.
$$
\end{thm}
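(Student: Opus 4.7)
The plan is to invoke \autoref{thm:inhomo:ordering} on the inhomogeneous chains $\sequence{\X{\MCC}}$ and $\sequence{\X{\FCC}}$ from \eqref{alg:cc:gibbs}, with the correspondence $(\P[1], \Q[1]) = (\P[\MCC], \Q[\MCC])$ and $(\P[0], \Q[0]) = (\P[\FCC], \Q[\FCC])$, applied to the objective function $f(m, z) \eqdef h(m)$, and then to transfer the resulting inequality back to the marginal chains $\sequence{\M{\alg}}$ via an indexing translation.

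First I would verify assumption \A{ass:Peskun:order}. Part $(i)$ is essentially free: the proof of \autoref{rem:reversibility} together with its accompanying \autoref{lem:rev:MCC} provides $\starg$-reversibility of $\P[\MCC] = \P[\FCC]$ and of $\Q[\MCC]$, while $\Q[\FCC]$ is the identity kernel and hence reversible with respect to every distribution. For part $(ii)$, the equality $\P[\MCC] = \P[\FCC]$ makes $\P[\MCC] \pgeq[1] \P[\FCC]$ trivial, so the only substantive ordering is $\Q[\MCC] \pgeq[1] \Q[\FCC]$. This is the key observation: since $\Q[\FCC]$ is the identity,
$$
\int f(y) \, \Q[\FCC] f(y) \, \starg(\rmd y) = \Ltwonorm[\starg]{f}^2 \eqsp,
$$
while any $\starg$-reversible Markov kernel acts as a self-adjoint contraction on $\Ltwo[\starg]$, forcing $\int f(y) \, \Q[\MCC] f(y) \, \starg(\rmd y) \leq \Ltwonorm[\starg]{f}^2$. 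In other words, the identity kernel is maximal in the covariance ordering among $\starg$-reversible Markov kernels, which is exactly the inequality required.

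Next I would verify the summability hypothesis \eqref{eq:assumpFuncThm} for this $f$. Because both $\Q[\MCC]$ and $\Q[\FCC]$ act as the identity on the discrete coordinate, one has $f(\X[2k]{\alg}) = h(\M[k]{\alg})$ and $f(\X[2k+1]{\alg}) = h(\M[k+1]{\alg})$, so every covariance appearing in \eqref{eq:assumpFuncThm} reduces to $\covar{h(\M[0]{\alg})}{h(\M[j]{\alg})}$ for an appropriate index $j \in \nset$. The absolute summability of these covariances, which is assumed in the theorem statement, therefore implies \eqref{eq:assumpFuncThm}, and \autoref{thm:inhomo:ordering} delivers $\asvar{\MCC}{f} \leq \asvar{\FCC}{f}$.

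Finally, the same pairing identities give the bookkeeping
$$
\sum_{k=0}^{2N-1} f(\X[k]{\alg}) = h(\M[0]{\alg}) + 2 \sum_{k=1}^{N-1} h(\M[k]{\alg}) + h(\M[N]{\alg}) \eqsp,
$$
from which a direct limit computation, with the boundary terms contributing negligibly, yields $\asvar{\alg}{f} = 2 \lim_{N \to \infty} N^{-1} \var{\sum_{k=1}^{N} h(\M[k]{\alg})}$ for each $\alg \in \intvect{\MCC}{\FCC}$; combining this with $\asvar{\MCC}{f} \leq \asvar{\FCC}{f}$ then gives the stated inequality. I expect the main conceptual obstacle to be the embedding itself: neither $\sequence{\Y{\MCC}}$ nor $\sequence{\Y{\FCC}}$ is reversible on its own, so a Peskun-type comparison is not directly available; the trick is to split each transition into the two reversible half-steps of \eqref{alg:cc:gibbs} so that \autoref{thm:inhomo:ordering} becomes applicable, after which the verification of the covariance ordering reduces to the contraction property of $\Q[\MCC]$ noted above.
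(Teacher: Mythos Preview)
Your proposal is correct and follows essentially the same route as the paper: embed into the two-kernel inhomogeneous chain \eqref{alg:cc:gibbs}, verify \A{ass:Peskun:order}, check \eqref{eq:assumpFuncThm} for $f(m,z)=h(m)$ via the identity $\cM[k]{\alg}=\M[k]{\alg}$, apply \autoref{thm:inhomo:ordering}, and translate back to the $\M{}$-chain.

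The only noteworthy difference is in how you establish $\Q[\MCC]\pgeq[1]\Q[\FCC]$. The paper observes that $\Q[\FCC]$, being the identity kernel, has no off-diagonal mass, so $\Q[\MCC]\succeq\Q[\FCC]$ in the Peskun (off-diagonal) sense, and then invokes Tierney's lemma that $\succeq$ implies $\pgeq[1]$. You instead argue directly in the covariance ordering: since $\Q[\MCC]$ is $\starg$-reversible it is a self-adjoint contraction on $\Ltwo[\starg]$, hence $\pscal{f}{\Q[\MCC]f}\leq\Ltwonorm[\starg]{f}^2=\pscal{f}{\Q[\FCC]f}$. Both arguments are one-liners here; yours is slightly more self-contained as it bypasses the Peskun-to-covariance implication. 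On the bookkeeping side, the paper exploits the index shift $\sum_{k=1}^{2n}f(\X[k]{\alg})=2\sum_{k=1}^{n}h(\M[k]{\alg})$ to get an exact identity, whereas your telescoping with boundary terms reaches the same limit; the paper's version is a touch cleaner but the conclusion is identical.
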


\begin{proof}
By \autoref{rem:reversibility}, the processes $\{\X[k]{\GIBBS} ;  k \in \nset\}$ and $\{\X[k]{\CC} ; k \in \nset\}$ are both inhomogeneous Markov chains that evolve alternatingly according to the $\starg$-reversible kernels $\P[\alg]$ and $\Q[\alg]$, $\alg\in\intvect{\GIBBS}{\CC}$. Moreover, $\P[\MCC]=\P[\FCC] \succeq \P[\FCC]$, and since $\Q[\FCC]$ has no off-diagonal component, $\Q[\MCC] \succeq \Q[\FCC]$. Now, define Markov chains $\sequence{\X{i}}$, $i \in \intvect{\MCC}{\FCC}$, as in \eqref{alg:cc:gibbs} with $\X[0]{i} \sim \targ$, and set $f(m,z) \equiv h(m)$. By construction, $\cM[k]{i}=\M[k]{i}$ for $i \in \intvect{\MCC}{\FCC}$ and $k \in \Npos$,  implying that
\begin{multline} \label{eq:implied:bddness}
	\sum_{k = 1}^{\infty} \left( |\covar{f(\X[0]{i})}{f(\X[k]{i})}| + |\covar{f(\X[1]{i})}{f(\X[k+1]{i})}| \right) \\
	= \targ f^2 - \targ^2 f + 4 \sum_{k = 1}^\infty |\covar{h(\M[0]{i})}{h(\M[k]{i})}| < \infty \eqsp.
\end{multline}
Moreover, for all $n \in \Npos$ and $i \in \intvect{\MCC}{\FCC}$,
\begin{multline*}
		\var{\sum_{k = 1}^n h(\M[k]{i})} = \var{\sum_{k = 1}^n h(\cM[k]{i})} \\
= \frac{1}{4} \var{\sum_{k =1}^{2n} f(\X[k]{i})} \eqsp,
\end{multline*}
which implies, by \eqref{eq:implied:bddness}, that for $i \in \intvect{\MCC}{\FCC}$,
\begin{equation*}
\lim_{n\to\infty} \frac{1}{n}\var{\sum_{k=1}^n h(\M[k]{i})} = \frac{1}{2} \lim_{n \to \infty} \frac{1}{n} \var{\sum_{k = 1}^n f(\X[k]{i})} \eqsp.
\end{equation*}
Finally, by \eqref{eq:implied:bddness} we may apply \autoref{thm:inhomo:ordering} to the chains $\{ \X[k]{i} ; k \in \nset \}$, $i \in \intvect{0}{1}$, which establishes immediately the statement of the theorem.

\end{proof}

\begin{lem} \label{lem:rev:MCC}
The Markov kernels $\P[\MCC]$ and $\Q[\MCC]$ are both $\starg$-reversible.
\end{lem}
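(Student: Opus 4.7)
The plan is to verify the two detailed balance relations separately, treating $\Q[\MCC]$ as an essentially immediate consequence of the reversibility of $K_m$ with respect to the conditional $\starg(\rmd z \mid m)$, and $\P[\MCC]$ by exhibiting a symmetric integral expression via the extended target $\targ$ on $\intvect{1}{n} \times \Zsp^n$ from \eqref{eq:pi_CC}.

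For $\Q[\MCC]$, I would first split $\starg(\rmd m \times \rmd z) = \starg(\rmd m) \starg(\rmd z \mid m)$ and write
\begin{equation*}
\starg(\rmd m \times \rmd z) \Q[\MCC]((m, z), \rmd m' \times \rmd z') = \starg(\rmd m) \starg(\rmd z \mid m) K_m(z, \rmd z') \delta_m(\rmd m') \eqsp.
\end{equation*}
Since $K_m$ is $\starg(\rmd z \mid m)$-reversible (a property established just after \eqref{eq:def:K-m}), the product $\starg(\rmd z \mid m) K_m(z, \rmd z')$ is symmetric in $(z, z')$; the Dirac $\delta_m(\rmd m')$ then forces $m = m'$ everywhere, so swapping $(m, z) \leftrightarrow (m', z')$ leaves the whole expression unchanged. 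This is exactly the detailed balance equation for $\Q[\MCC]$ with respect to $\starg$.

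For $\P[\MCC]$, I would compute $\starg(\rmd m \times \rmd z) \P[\MCC]((m, z), \rmd \check m \times \rmd \check z)$ directly. Using the Dirac $\delta_z(\rmd u_m)$ to promote $z$ into a dummy variable $u_m$ integrated against $\nu$, one obtains
\begin{equation*}
\starg(m, u_m) \Bigl( \prod_{j \neq m} \rho_j(u_j) \Bigr) \targ(\check m \mid u) \, |\rmd m|\, |\rmd \check m|\, \nu^{\otimes n}(\rmd u)\, \delta_{u_m}(\rmd z)\, \delta_{u_{\check m}}(\rmd \check z) \eqsp.
\end{equation*}
The key observation is that the first product is precisely the density $\targ(m, u)$ of the extended target, so using $\targ(m, u) = \targ(u) \targ(m \mid u)$ the expression becomes
\begin{equation*}
\targ(u) \targ(m \mid u) \targ(\check m \mid u)\, |\rmd m|\, |\rmd \check m|\, \nu^{\otimes n}(\rmd u)\, \delta_{u_m}(\rmd z)\, \delta_{u_{\check m}}(\rmd \check z) \eqsp.
\end{equation*}
This is manifestly symmetric under the swap $(m, z) \leftrightarrow (\check m, \check z)$, which yields $\starg(\rmd m \times \rmd z) \P[\MCC]((m, z), \rmd \check m \times \rmd \check z) = \starg(\rmd \check m \times \rmd \check z) \P[\MCC]((\check m, \check z), \rmd m \times \rmd z)$.

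The main obstacle is purely bookkeeping: one must carefully handle the Dirac factors $\delta_z(\rmd u_m)$ and $\delta_{u_{\check m}}(\rmd \check z)$ when moving between the marginal space $\Ysp$ and the extended space $\intvect{1}{n} \times \Zsp^n$, and recognize that absorbing $\starg(m, z)$ together with the auxiliary draws $\prod_{j \neq m} \rho_j(u_j)$ reconstitutes $\targ(m, u)$. Once this identification is made, reversibility of $\P[\MCC]$ reduces to the obvious symmetry of $\targ(m \mid u)\targ(\check m \mid u)$ in $m$ and $\check m$.
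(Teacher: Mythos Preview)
Your proof is correct and follows essentially the same route as the paper. For $\Q[\MCC]$ both arguments reduce immediately to the $\starg(\rmd z \mid m)$-reversibility of $K_m$; for $\P[\MCC]$ both use the Dirac swap $\nu(\rmd z)\delta_z(\rmd u_m)=\nu(\rmd u_m)\delta_{u_m}(\rmd z)$ and then exhibit a density symmetric in $(m,\check m)$ --- the paper writes this density out explicitly as a function $A(m,\check m,u)$, whereas you recognise it as $\targ(u)\targ(m\mid u)\targ(\check m\mid u)$, which is the same quantity expressed more compactly via the extended target.
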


\begin{proof}
Write, using the identity
\begin{multline*}
\nu(\rmd\z) \delta_{\z}(\rmd\auxi_m) \delta_{\auxi_{\cm}}(\rmd \cz) \prod_{j \neq m} \nu(\rmd\auxi_j) \\Ê
= \delta_{\auxi_m}(\rmd\z) \delta_{\auxi_{\cm}}(\rmd\cz) \prod_{j = 1}^n \nu(\rmd \auxi_j) \eqsp,
\end{multline*}
for any nonnegative measurable function $f$ on $(\Ysp,\Yalg)$,
\[
\begin{split} 
\lefteqn{\iint f(\y, \cy) \starg(\rmd\y) \P[\MCC](\y, \rmd\cy)} \\
=& \idotsint f(\y, \cy) \starg(\m, \z) |\rmd \m| \nu(\rmd \z) \delta_{\z}(\rmd \auxi_m) \\
&\qquad \times \left(
\prod_{j \neq \m} \rho_j(\rmd \auxi_j)
\right)
\frac{\starg(\cm, \auxi_{\cm}) \prod_{j \neq \cm} \rho_j(\auxi_j)}{\sum_{k = 1}^n \starg(k, \auxi_k) \prod_{\ell \neq k} \rho_\ell(\auxi_{\ell})} \\
&\qquad
\times |\rmd \cm|Ê\delta_{\auxi_{\cm}}(\rmd \cz) \\
=& \idotsint f(\y, \cy) \starg(\m, \z) \starg(\cm,\auxi_{\cm}) \\
&\qquad
\times \frac{\prod_{j\neq m}\rho_j(\auxi_j) \prod_{j\neq \cm}\rho_j(\auxi_j)}{\sum_{k = 1}^n \starg(k, \auxi_k) \prod_{\ell \neq k} \rho_{\ell}(\auxi_\ell)} \\
&\qquad
\times \left( \prod_{j = 1}^n \nu(\rmd \auxi_j) \right) |\rmd \m| |\rmd \cm| \delta_{\auxi_{\m}}(\rmd\z) \delta_{\auxi_{\cm}}(\rmd\cz) \eqsp.
\end{split}
\]
Thus, integrating first over $\z$ and $\cz$ and defining
\begin{multline*}
A(\m, \cm, \auxi) \\Ê
\eqdef \starg(\m, \auxi_{\m}) \starg(\cm,\auxi_{\cm})
\frac{\prod_{j \neq m} \rho_j(\auxi_j) \times \prod_{j\neq \cm} \rho_j(\auxi_j)}{\sum_{k = 1}^n\starg(k, \auxi_k) \prod_{\ell \neq k} \rho_{\ell}(\auxi_\ell)}
\end{multline*}
yields
\[
\begin{split}
\lefteqn{\iint f(\y, \cy) \starg(\rmd\y) \P[\CC](\y, \rmd\cy)} \\
&= \idotsint f((\m, \auxi_{\m}), (\cm,\auxi_{\cm})) A(\m, \cm, \auxi) \\
& \hspace{20mm} \times \prod_{j = 1}^n \nu(\rmd \auxi_j) |\rmd \m| |\rmd \cm| \eqsp.
\end{split}
\]
Now, the symmetry $A(m, \cm, \auxi) = A(\cm, m, \auxi)$ implies the identity
\begin{equation} \label{eq:CC:rev:id}
\iint f(\y, \cy) \starg(\rmd\y) \P[\MCC](\y, \rmd\cy) = \iint f(\y, \cy) \starg(\rmd\cy) \P[\MCC](\cy, \rmd\y) \eqsp,
\end{equation}
and as $f$ was chosen arbitrarily, \eqref{eq:CC:rev:id} implies that
$$
\starg(\rmd\y) \P[\MCC](\y, \rmd\cy) = \starg(\rmd\cy) \P[\MCC](\cy, \rmd\y) \eqsp,
$$
which establishes the $\starg$-reversibility of $\P[\MCC]$.

We show that $\Q[\MCC]$ is $\starg$-reversible. Again, let $f$ be some nonnegative measurable function on $(\Ysp,\Yalg)$. Then, using that $K_m$ is reversible with respect to $\starg(\rmd z \mid m)$ for all $m \in \intvect{1}{n}$, we obtain, denoting $\cy \eqdef (\cm, \cz)$ and $y \eqdef (m, z)$,
\[
\begin{split}
\lefteqn{\idotsint f(\cy, y) \starg(\rmd \cy) K_{\cm}(\cz,\rmd z) \delta_{\cm}(\rmd m)} \\
& =\idotsint f(\cy, y) \starg(\rmd \cm) \starg(\rmd \cz \mid \cm) K_{\cm}(\cz,\rmd z) \delta_{\cm}(\rmd m) \\
& =\idotsint f(\cy, y) \starg(\rmd \cm) \starg(\rmd z \mid \cm) K_{\cm}(z,\rmd \cz) \delta_{\cm}(\rmd m) \\
& =\idotsint f(\cy, y) \starg(\rmd m)\starg(\rmd z \mid m) K_m(z,\rmd \cz) \delta_m (\rmd \cm) \eqsp.
\end{split}
\]
This implies
\begin{multline*}
\iint f(\cy, \y) \starg(\rmd \y) \Q[\MCC](\y, \rmd\cy) \\Ê
= \iint f(\cy, \y) \starg(\rmd\cy) \Q[\MCC](\cy, \rmd \y) \eqsp,
\end{multline*}
which completes the proof.
\end{proof}

\section{Numerical illustrations}
\label{sec:numerics}
In this section we compare numerically the performances of the different algorithms described in the previous section. The comparisons will be based on two different models: firstly, a simple toy model consisting of a mixture of two Gaussian strata and, secondly, a model where only partial observations of the mixture variables are available. All implementations are in \textsc{Matlab}, running on a MacBook Air with a $1.8$ GHz Inter Core i7 processor.

\subsection{Mixture of Gaussian strata}
\label{example:toy}

Let $\Ysp = \intvect{1}{2} \times \rset$ (i.e. $\Zsp = \rset$ in this case) and consider a pair of random variables $(M, Z)$ distributed according to the Gaussian mixture model
\begin{equation} \label{eq:toy:model}
\starg(\m, \z) = \frac{1}{2} \normd{z}{\mu_m}{\sigma^2} \quad ((m, z) \in \Ysp) \eqsp,
\end{equation}
where $\sigma > 0$, $(\mu_1, \mu_2) = (-1, 1)$, and $\normd{z}{\mu}{\sigma^2}$ denotes the Gaussian probability density function with mean $\mu$ and variance $\sigma^2$. Even though it is straightforward to generate i.i.d. samples from this simple toy model, we use it for illustrating and comparing the performances of the algorithms proposed in the previous; in particular, since $\starg(\z \mid \m)$ is simply a Gaussian distribution in this case, it is possible execute Step~(iii) in \autoref{alg:algCC} (which is, as mentioned, far from always the case; see the next example). For small values of $\sigma$, such as  the value $\sigma = \sqrt{.2}$ used in this simulation, the two modes are well-separated, implying a strong correlation between the discrete and continuous components. As a consequence we may expect the naive Gibbs sampler to exhibit a very sub-optimal performance in this case. In order to improve mixing we introduced Gaussian pseudo-priors
$$
\rho_\ell(\auxi) \eqdef \normd{\auxi}{\tilde{\mu}_\ell}{\tilde{\sigma}^2_\ell}\quad ((\ell, \auxi) \in \intvect{1}{2} \times \rset) \eqsp
$$
on $\rset$, where $(\tilde{\mu}_1, \tilde{\mu}_2, \tilde{\sigma}_1^2, \tilde{\sigma}_2^2) = (-.5, .5, .15, .25)$, and executed, using these pseudo-priors, \autoref{alg:algCC}, \autoref{alg:algMCC}, and \autoref{alg:algFCC}. Moreover, the naive Gibbs sampler was implemented for comparison. \autoref{alg:algMCC} used the proposal
$$
R_\ell(\auxi, \rmd \z) = \rho_\ell(\rmd z) \quad ((\ell, \auxi) \in \intvect{1}{2} \times \rset) \eqsp,
$$
yielding an algorithm that can be viewed as a hybrid between  \autoref{alg:algCC} and \autoref{alg:algFCC} in the sense that it ``refreshes'' randomly the continuous component $\Auxi_{\m'}$ obtained after Step~(ii) by replacing, with the  Metropolis-Hastings probability $\alpha_{\m'}$, the same by a draw from $\rho_\m'$. Cf. \autoref{alg:algCC} and \autoref{alg:algFCC}, where $\Auxi_{\m'}$ is refreshed systematically according to $\rho_\m'$ and kept frozen, respectively. For each of these algorithms we generated an MCMC trajectory comprising $101,\!000$ iterations (where the first $1,\!000$ iterations were regarded as burn-in and discarded) and estimated the corresponding autocorrelation functions. The outcome, which is displayed in \autoref{fig:autocorrelation:toy:model} below, indicates increasing autocorrelation for the CC, MCC, FCC, and Gibbs algorithms, respectively, confirming completely the theoretical results obtained in the previous section. Interestingly, the FCC algorithm has, despite being close to twice as efficient in terms of CPU with our implementation, only slightly higher autocorrelation than the MCC algorithm (the same applies to both the components). As expected, the Gibbs sampler suffers from very large autocorrelation as it tends to get stuck in the different modes, while the CC algorithm has the highest performance at a computational complexity that is comparable to that of the FCC algorithm in this case (due to \textsc{Matlab}'s very efficient Gaussian random number generator). Qualitatively, similar outcomes are obtained if the parametrisations of the target distribution or the pseudo-priors are changed.

\begin{figure*}
        \centering
        \begin{subfigure}[b]{0.50\textwidth}
                \includegraphics[width=\textwidth]{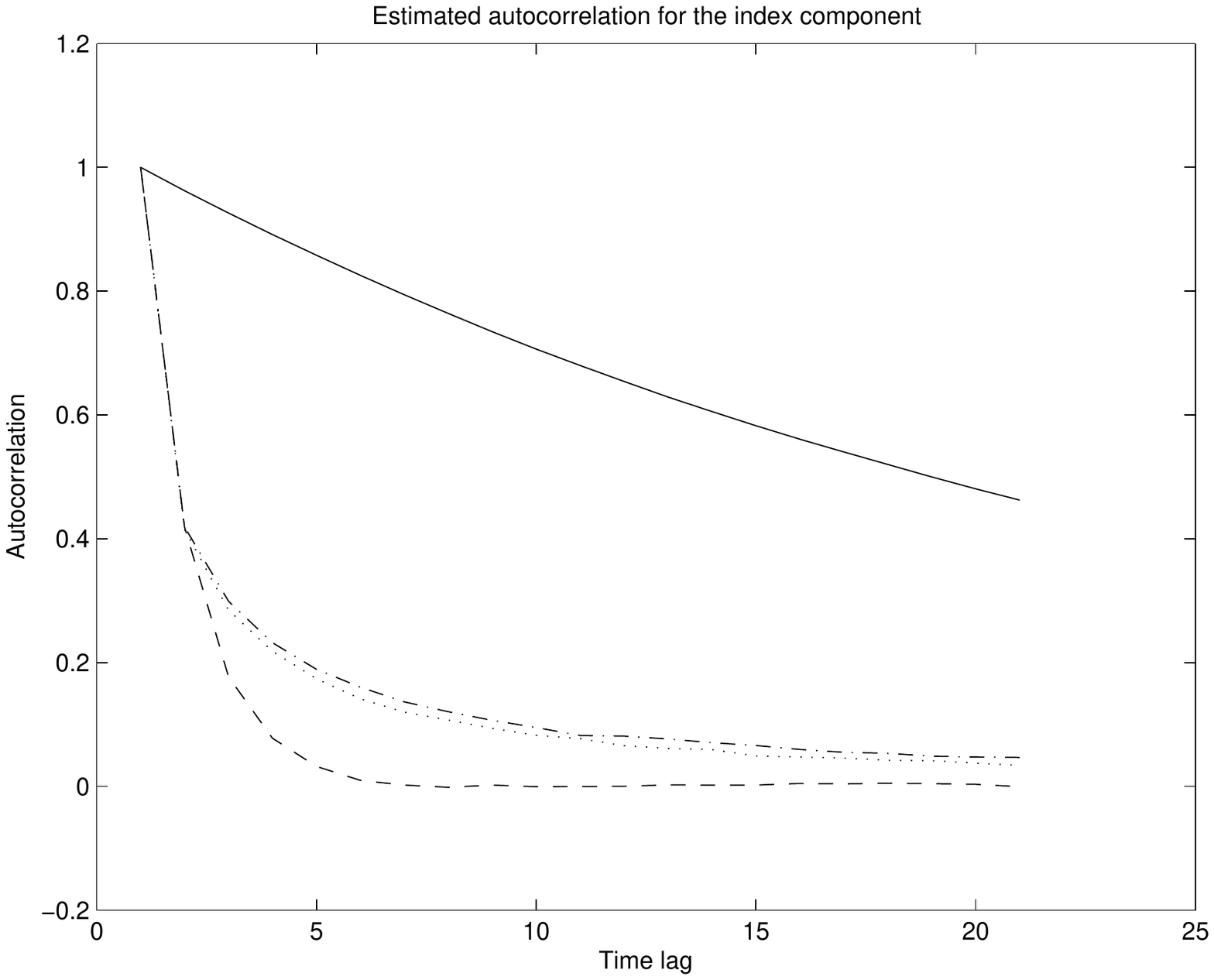}
                \caption{$M$-component}
                \label{fig:toy:m-component}
        \end{subfigure}~\begin{subfigure}[b]{0.50\textwidth}
                \includegraphics[width=\textwidth]{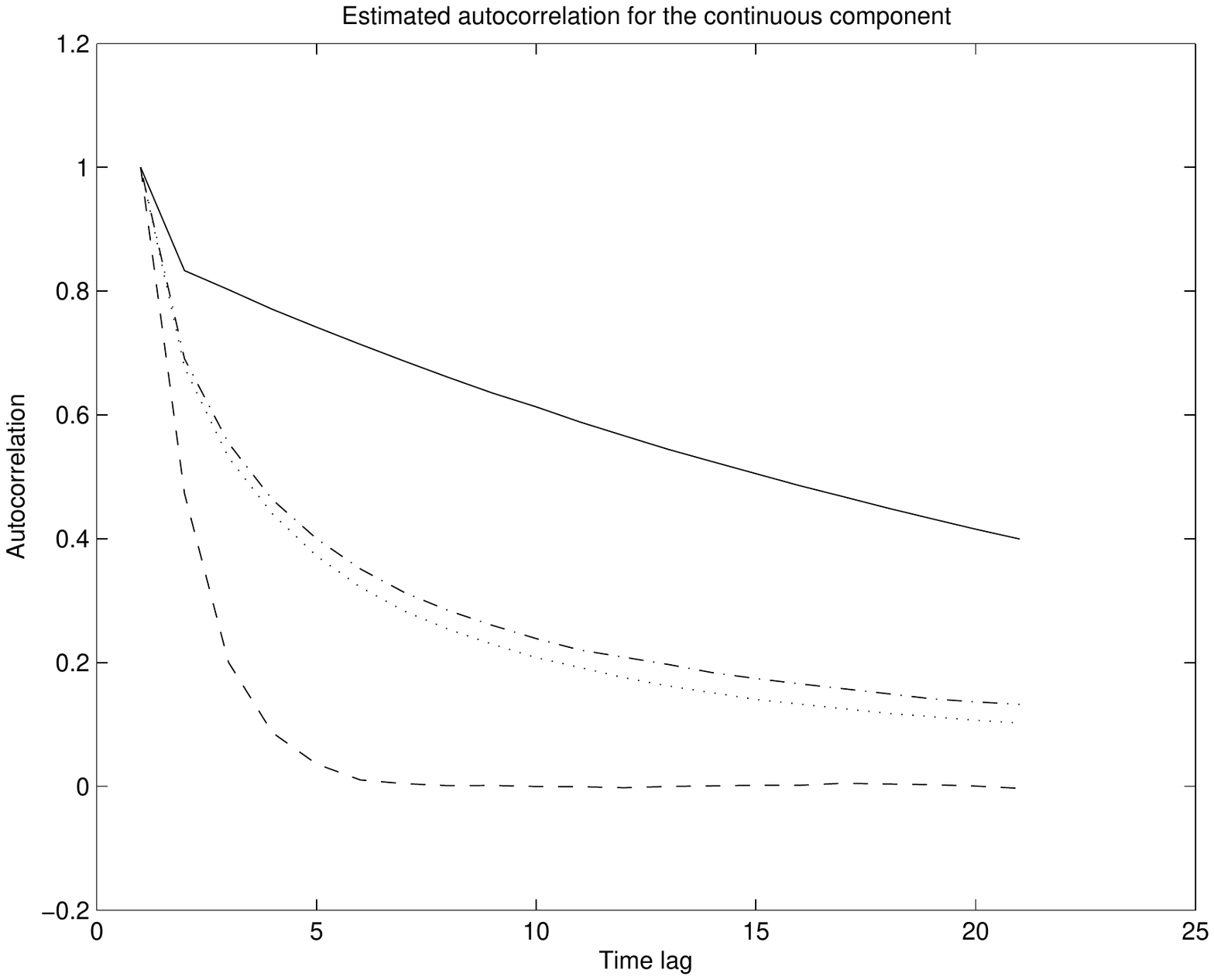}
                \caption{$Z$-component}
                \label{fig:toy:z-component}
        \end{subfigure}
        \caption{Plot of estimated autocorrelation for the standard Gibbs sampler (solid line), \autoref{alg:algCC} (dashed line), \autoref{alg:algMCC} (dotted line), and \autoref{alg:algFCC} (dash-dotted line) when applied to the model \eqref{eq:toy:model}.} \label{fig:autocorrelation:toy:model}
\end{figure*}

\subsection{Partially observed mixture variables}
\label{sec:partial:observations}

In this example we consider a model with two layers, where a pair $Y = (M, Z)$ of random variables, forming a mixture model $\nonobs$ on $\Ysp = \intvect{1}{n} \times \Zsp$ of the form described in \autoref{subsec:carlinchib}, is only \emph{partially observed} through some random variable $X$ taking values in some other state space $(\Xsp, \Xalg)$. More specifically, we assume that the distribution of $X$ conditionally on $Y$ is given by some Markov transition density $\Lhd$ on $\Ysp \times \Xsp$, i.e.
$$
X \mid (M, Z) \sim \Lhd((M, Z), x) \refx(\rmd x) \eqsp,
$$
where $\refx$ is some reference measure on $(\Xsp, \Xalg)$. When operating on a model of this form one is typically interested in computing the conditional distribution of the latent variable $Y$ given some distinguished value $X = x \in \Xsp$ of the observed variable. This posterior distribution has the density
\begin{multline*}
\starg(\m, \z \mid x) = \frac{\Lhd((\m, \z), x) \nonobs(\m, \z)}{\iint \Lhd((\m, \z), x) \nonobs(\m, \z) |\rmd \m|Ê\nu(\rmd \z)} \\Ê((\m, \z, x) \in \Ysp \times \Xsp)
\end{multline*}
w.r.t. the product $|\rmd \m|Ê\nu(\rmd \z)$. Since the observation $x$ is fixed, we simply omit this quantity from the notation and write $\starg(\m, \z \mid x) = \starg(\m, \z)$. Note that $\starg$ is again a mixture model on $\Ysp$, and our objective is to sample this distribution.

In order to evaluate, in this framework, the performances of the MCMC samplers discussed in the previous section we let, as in the previous example, $\Ysp = \intvect{1}{2} \times \rset$ and consider the Gaussian mixture model
$$
\nonobs(\m, \z) = \alpha_\m \normd{\z}{\mu_m}{\sigma^2} \quad ((\m, \z) \in \intvect{1}{2} \times \rset) \eqsp,
$$
where $\alpha_1 = 1/4$, $\alpha_2 = 3/4$, $\mu_1 = -1$, $\mu_2 = 1$, and $\sigma = \sqrt{.2}$. (Note that letting $\alpha_1 = \alpha_2 = 1/2$ yields the mixture model \eqref{eq:toy:model} of the previous example.) In addition, we let be $(M, Z)$ be partially observed through
\begin{equation} \label{eq:posterior:model}
X = Z^2 + \varsigma \varepsilon \eqsp,
\end{equation}
where $\varsigma = \sqrt{.1}$ and $\varepsilon$ is a standard Gaussian noise variable which is independent of $Z$. Consequently, the measurement density (with respect to Lebesgue measure) is given by $\Lhd((\m, \z), x) = \normd{x}{\z^2}{\varsigma^2}$, $x \in \rset$, in this case. For the fixed observation value $x = .4$ we estimated the posterior distribution
\begin{multline*}
\starg(\m, \z) \propto \alpha_\m \normd{\z}{\mu_m}{\sigma^2} \normd{x}{\z^2}{\varsigma^2} \\
((\m, \z) \in \intvect{1}{2} \times \rset)
\end{multline*}
and the corresponding posterior mean $\mu_z \eqdef \int z \starg(\rmd \z)$
using \autoref{alg:algMCC} and \autoref{alg:algFCC}. Note that we are unable to sample directly the conditional distribution $\starg(\z \mid \m)$ in this case due to the nonlinearity of the observation equation \eqref{eq:posterior:model}; thus, \autoref{alg:algCC} is excluded from our comparison. In addition, we implemented the Gibbs sampler \autoref{alg:algGibbs} with Step~(ii) replaced by a Metropolis-Hastings operation, yielding a Metropolis-within-Gibbs (MwG) sampler. This Metropolis-Hastings operation as well as in the corresponding operation in Step~(iii) of the MCC sampler (\autoref{alg:algMCC}) used the conditional prior distribution as proposal, e.g.
$$
R_\ell(\auxi, \rmd \z) = \nonobs(\rmd \z \mid \m) \quad ((\ell, \auxi) \in \intvect{1}{2} \times \rset) \eqsp.
$$
This distribution was also used for designing the pseudo-priors in the MCC and FCC algorithms, e.g.,
$$
\rho_\ell(\rmd \z) = \nonobs(\rmd \z \mid \m) \quad ((\ell, \auxi) \in \intvect{1}{2} \times \rset) \eqsp,
$$
and consequently the MCC sampler can, as in the previous example, be viewed as a ``random refreshment''-version (using the terminology of \cite{maire:douc:olsson:2014}) of the FCC sampler. The resulting autocorrelation function estimates are displayed in \autoref{fig:autocorrelation:posterior:model}, which shows that the FCC and MCC algorithms are clearly superior, in terms of autocorrelation, to the MwG sampler. Even though the MCC sampler has, as expected from \autoref{thm:compMCC-FCC}, a small advantage to the FCC sampler in terms of autocorrelation, both samplers exhibit, very similar mixing properties. This is particularly appealing in the light of the CPU times reported in \autoref{tab:mean:CPU}, which shows that the FCC sampler almost to twice as fast as the MCC sampler for our implementation.  \autoref{tab:mean:CPU} reports also the posterior mean estimates obtained with the different algorithms, and apparently the slow mixing of the MwG sampler rubs off on the precision of the corresponding estimate. The true value, $\mu_z = .315$, was obtained using numerical integration.

\begin{figure*}
        \centering
        \begin{subfigure}[b]{0.50\textwidth}
                \includegraphics[width=\textwidth]{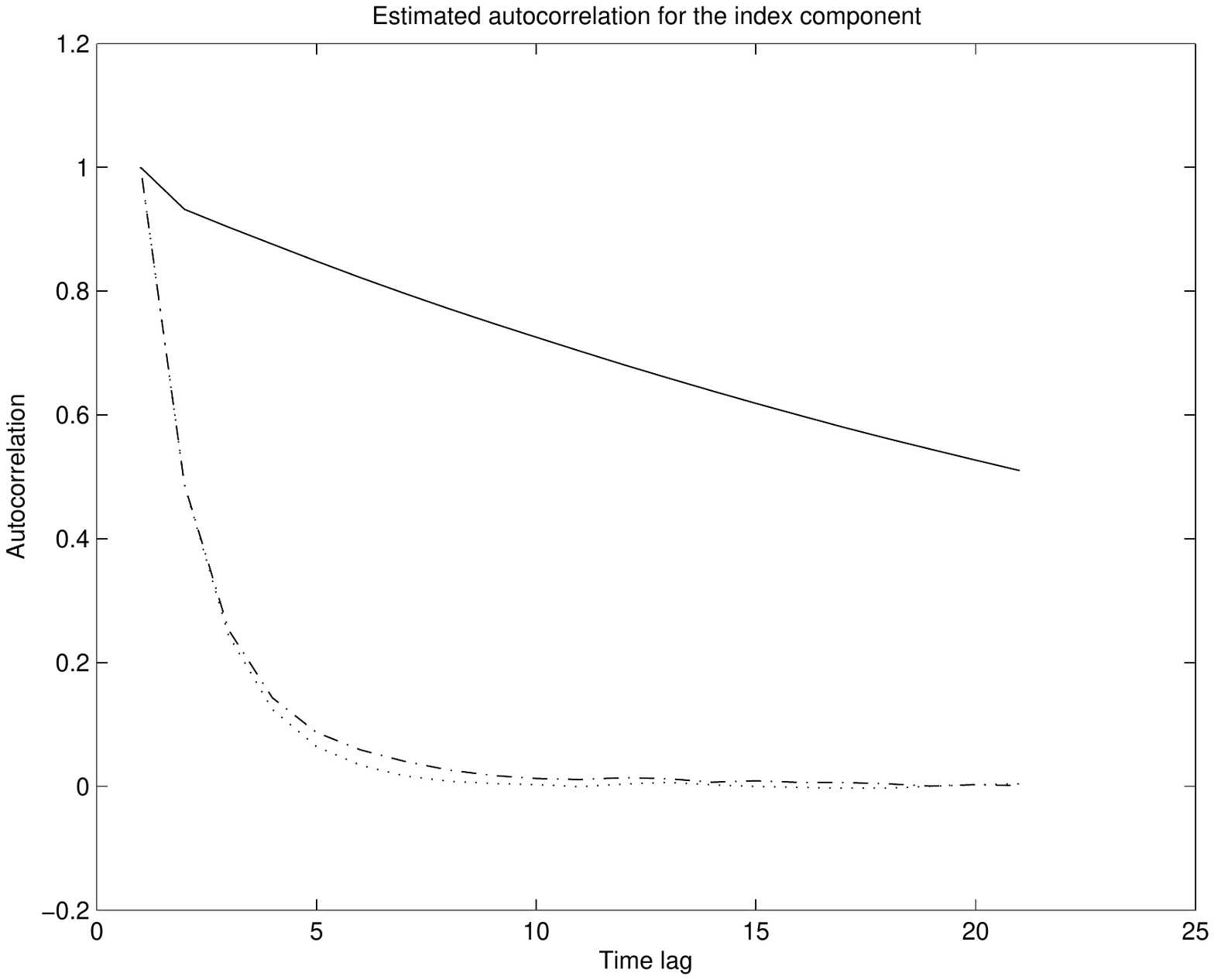}
                \caption{$M$-component}
                \label{fig:posterior:m-component}
        \end{subfigure}~\begin{subfigure}[b]{0.50\textwidth}
                \includegraphics[width=\textwidth]{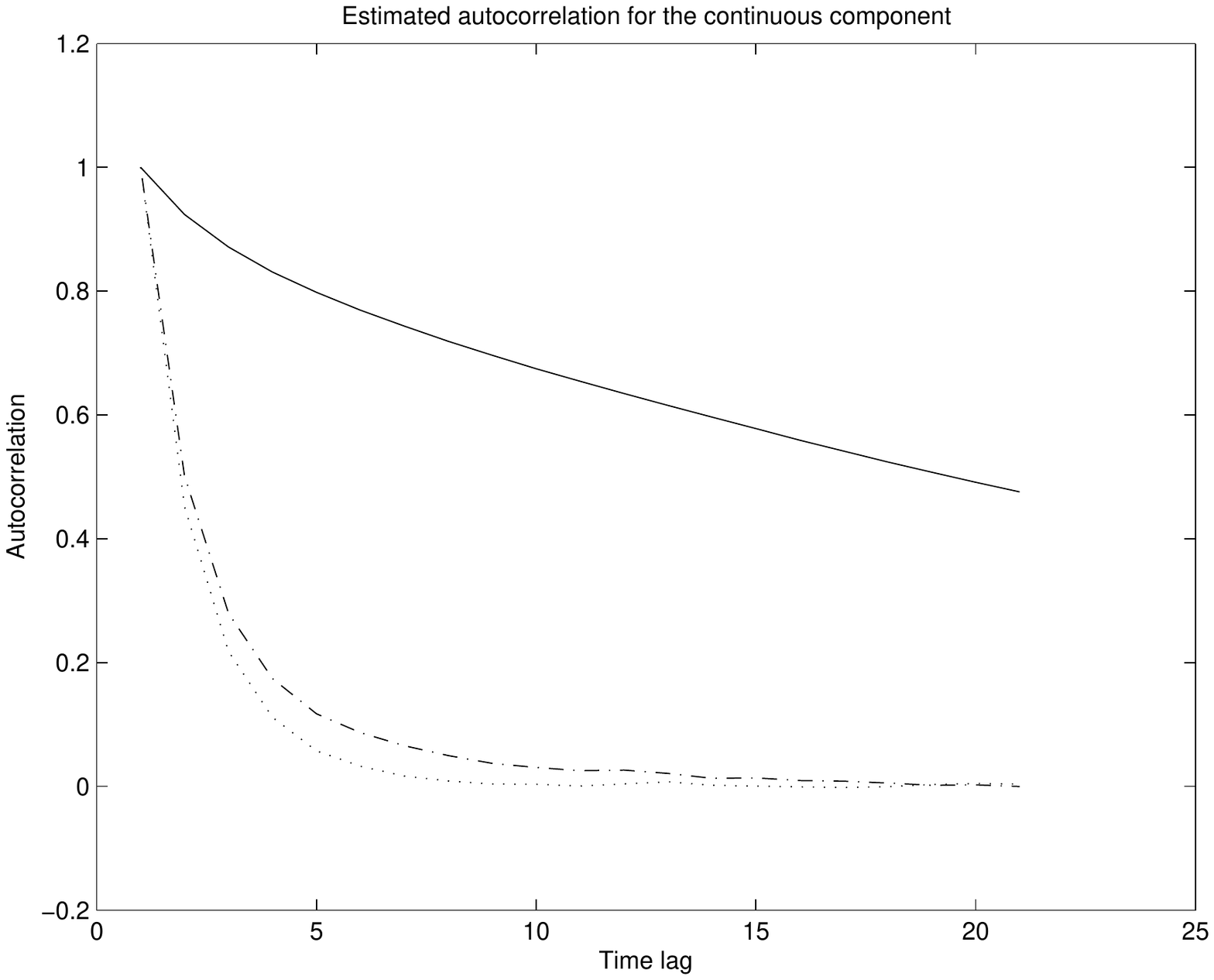}
                \caption{$Z$-component}
                \label{fig:posterior:z-component}
        \end{subfigure}
        \caption{Plot of estimated autocorrelation for the Metropolis-within-Gibbs sampler (solid line), \autoref{alg:algMCC} (dotted line), and \autoref{alg:algFCC} (dash-dotted line) when applied to the model \eqref{eq:posterior:model}.} \label{fig:autocorrelation:posterior:model}
\end{figure*}

\begin{table}
\caption{Posterior means delivered the MwG, MCC, and FCC algorithms for the partially observed mixture model \eqref{eq:posterior:model} together with the corresponding CPU times. The true posterior mean (for $x = .4$) is $\mu_z = .315$.}
\label{tab:mean:CPU}
\begin{tabular}{c|ccc}
\hline\noalign{\smallskip}
algorithm & mean & CPU time (s) \\
\noalign{\smallskip}\hline\noalign{\smallskip}
MwG & $.334$ & $50.9$ \\
MCC & $.311$ & $58.7$ \\
FCC & $.314$ & $33.4$ \\
\noalign{\smallskip}\hline
\end{tabular}
\end{table}

\autoref{fig:density:estimate} displays the estimate of the marginal posterior density $\starg(\z)$ obtained by applying a Gaussian kernel smoothing function to the output of the FCC algorithm. The exact posterior, obtained using numerical integration, is plotted for comparison.

Finally, we remark that also the results obtained in this example appear to be relatively insensitive to the parametrisation of the model and the pseudo-priors.

\begin{figure*}
\centering
\includegraphics[width=0.50\textwidth]{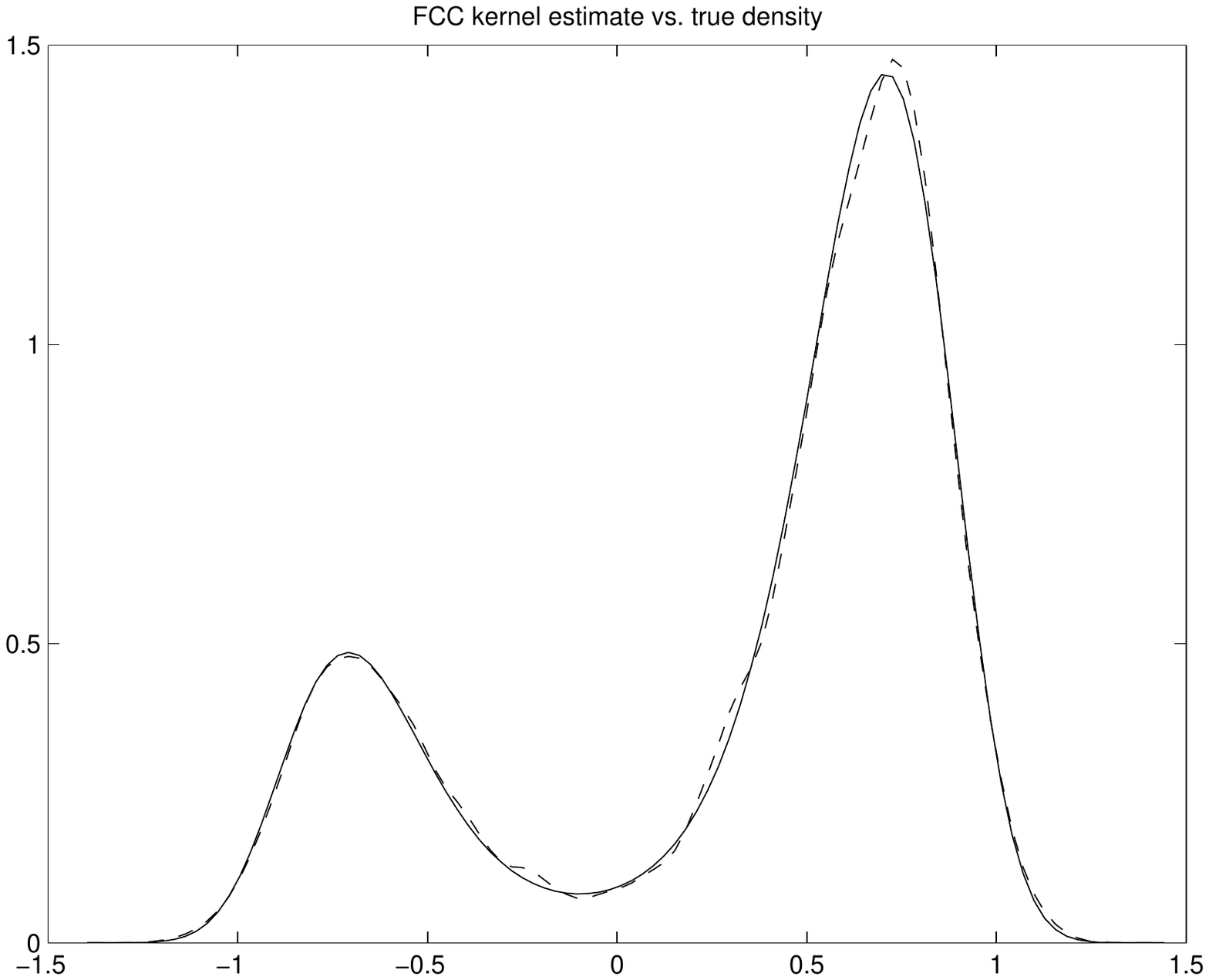}
\caption{Probability density estimate based on the sequence $\{ \Z[k]{\FCC} ; k \in \intvect{1001}{10^5} \}$ generated by \autoref{alg:algFCC} (dashed line)  for the partially observed mixture model \eqref{eq:posterior:model} together with the exact posterior density (solid line).}
\label{fig:density:estimate}       
\end{figure*}

\section{Conclusion}
\label{sec:conclusion}
We have compared some data-augmentation-type MCMC algorithms sampling from mixture models comprising a discrete as well as a continuous component. By casting Carlin \& Chib's pseudo-prior into our framework we obtained a sampling scheme that is considerably more efficient than the standard Gibbs sampler, which in general exhibits poor state-space exploration due to strong correlation between the discrete and continuous components (as a result of the highly multimodal nature of the mixture model). In the case where simulation of the continuous component $Z$ conditionally on $M$ is infeasible, we used a metropolised version of the algorithm, referred to as the MCC sampler, that handled this issue by means of an additional Metropolis-Hastings step in the spirit of the hybrid sampler. In this case our simulations indicate, interestingly, that the loss of mixing caused by simply passing, as in the FCC algorithm, the value of the $M$th auxiliary variable, generated by sampling from the pseudo-priors at the beginning of the loop, directly to $Z$ without any additional refreshment is limited. Thus, we consider the FCC algorithm, which we proved to be $\starg$-reversible, as strong contender to the MCC sampler in terms of efficiency (variance per unit CPU).  

Our theoretical results comparing the MCC and FCC samplers deal exclusively with mixing properties of the restriction of the MCMC output to the discrete component, and the extension of these results to the continuous component is left as an open problem. However, we believe that the discrete component is indeed the quantity of interest, as our simulations indicate that the degree mixing of the discrete component gives a limitation of the degree of mixing of the bivariate chain due to the multimodal nature of the mixture. 

There are several possible improvements of the FCC algorithm. For instance, following \cite{petralias:mcmc}, only a subset of the pseudo-priors (namely those with indices belonging to some neighborhood of the current $M$) could be sampled at each iteration, yielding a very efficient algorithm from a computational point of view. Such an approach could be also used for handling the case of an infinitely large index space (i.e. $n = \infty$).

\bibliographystyle{spmpsci}      
\bibliography{biblio}   


\end{document}